\newif\ifdouble
\newcommand{\R}{\mathbb{R}}
\newcommand{\N}{\mathbb{N}}
\newtheorem{assum}{Assumption}
\newtheorem{definition}[assum]{Definition}
\newtheorem{theorem}[assum]{Theorem}
\newtheorem{remark}[assum]{Remark}
\newtheorem{lemma}[assum]{Lemma}
\newtheorem{coro}[assum]{Corollary}
\newtheorem{problem}[assum]{Problem}
\begin{document}
\title{Hierarchical decomposition of LTL synthesis problem for nonlinear control systems}

\author{Pierre-Jean Meyer}

\author{Pierre-Jean Meyer, Dimos V.\ Dimarogonas
\thanks{This work was supported by the H2020 ERC Starting Grant BUCOPHSYS, the EU H2020 AEROWORKS project, the EU H2020 Co4Robots project, the Swedish Foundation for Strategic Research, the Swedish Research Council and the KAW Foundation.}%
\thanks{P.-J.\ Meyer is with the Department of Electrical Engineering and Computer Sciences at University of California, Berkeley, CA 94720-1770, USA (email: pjmeyer@berkeley.edu).}%
\thanks{D.\ V.\ Dimarogonas is with the Department of Automatic Control at KTH Royal Institute of Technology, 10044 Stockholm, Sweden (email: dimos@kth.se).}
}

\maketitle


\begin{abstract}
This paper deals with the control synthesis problem for a continuous nonlinear dynamical system under a Linear Temporal Logic (LTL) formula.
The proposed solution is a top-down hierarchical decomposition of the control problem involving three abstraction layers of the problem, iteratively solved from the coarsest to the finest.
The LTL planning is first solved on a small transition system only describing the regions of interest involved in the LTL formula.
For each pair of consecutive regions of interest in the resulting accepting path satisfying the LTL formula, a discrete plan is then constructed in the partitioned workspace to connect these two regions while avoiding unsafe regions.
Finally, an abstraction refinement approach is applied to synthesize a controller for the dynamical system to follow each discrete plan.
The second main contribution, used in the third abstraction layer, is a new monotonicity-based method to over-approximate the finite-time reachable set of any continuously differentiable system.
The proposed framework is demonstrated in simulation for a motion planning problem of a mobile robot modeled as a disturbed unicycle.
\end{abstract}

\begin{IEEEkeywords}
Hierarchical decomposition, LTL planning, abstraction-based synthesis, mixed-monotone systems, reachability analysis.
\end{IEEEkeywords}

\section{Introduction}
\label{sec intro}
Control synthesis and planning for continuous dynamical systems under high-level specifications, such as Linear Temporal Logic (LTL) formulas~\cite{baier2008principles}, usually cannot be solved directly on the continuous dynamics.
The classical solutions to this problem thus rely on a two-step approach, where we first create a finite abstraction (the \emph{abstract} or \emph{symbolic model}) of the continuous dynamical system (the \emph{concrete model}), and then leverage formal methods from the field of computer science to synthesize a controller for the abstraction to satisfy the high-level specifications.
Provided that the abstraction was created to obtain some behavioral relationship (such as alternating simulation~\cite{tabuada2009symbolic} or feedback refinement relation~\cite{reissig2016}) between the concrete and abstract models, the controller obtained on the abstraction can then be concretized into a controller for the concrete model to satisfy the desired specifications.

This topic recently received significant interest resulting in various abstraction methods such as designing local feedback controllers between any two neighboring cells of a state space partition to guarantee the creation of a deterministic abstraction~\cite{gol2013time,boskos2015decentralized}, considering infinite-time reachability analysis of neighboring cells~\cite{nilsson2014incremental,yang2017fuel}, or fixed and finite-time reachability analysis~\cite{coogan2015mixed,reissig2016}, which we consider in this paper.
While the combined results of all these approaches cover a wide range of dynamical systems and control objectives, when taken separately most of these approaches (as well as others in the literature) are restricted to particular classes of systems (e.g.\ multi-affine~\cite{gol2013time}, mixed-monotone~\cite{coogan2015mixed}) and subsets of LTL formulas (e.g.\ reach-avoid-stay~\cite{nilsson2014incremental,yang2017fuel}, co-safe LTL~\cite{gol2013time}).
However, providing a framework capable of solving the synthesis problem for any dynamical systems under general LTL formulas remains a challenging problem.

This can particularly be observed when considering the main software toolboxes in the literature aimed at addressing such high-level control problems on dynamical systems, which can be split in two categories.
On one side are tools such as TuLiP~\cite{tulip}, conPAS2~\cite{conpas2} and LTLMoP~\cite{ltlmop} which can handle general LTL specifications (conPAS2) or the large subset of GR(1) formulas (TuLiP, LTLMoP) but are restricted to simpler dynamical systems such as fully actuated (LTLMoP) and piecewise affine models (TuLiP, conPAS2).
On the other side, switched or nonlinear dynamical systems are handled by tools such as PESSOA~\cite{pessoa}, CoSyMa~\cite{cosyma} and SCOTS~\cite{scots}, 
but only for combination of safety and reachability specifications.

To overcome these limitations, in this paper we propose a 3-layer hierarchical decomposition of the control problem aimed at addressing general control synthesis for nonlinear dynamical systems under LTL specifications.
As opposed to the 2-step bottom-up symbolic control approach presented above which starts by computing an abstraction of the dynamical system before synthesizing a controller on this abstraction, we rather take inspiration from top-down hierarchical decomposition in the field of artificial intelligence~\cite{russell2009modern}.
In this approach, we have several granularities of abstraction of the control problem and we first solve the problem on the most abstract layer, then iteratively refine this result by going down to a more detailed layer whose subproblem consists of realizing the solution of the above layer.
Given an initial partition of the state space (possibly containing unsafe regions) and an LTL formula defined over a set of regions of interest, each corresponding to a single cell of this partition, the proposed hierarchical decomposition proceeds to the following three steps.
\begin{enumerate}
\item Solve the LTL planning problem on a finite transition system that only represents the regions of interest, and obtain a resulting infinite sequence of regions to visit.
\item Find a discrete plan in the partitioned state space connecting each pair of consecutive regions in this sequence while avoiding unsafe regions.
\item Synthesize a controller for the dynamical system to follow each discrete plan in a sampled-time manner.
\end{enumerate}

The main contribution of this paper is the $3$-layer hierarchical structure allowing to tackle control problems for nonlinear systems under general LTL specifications, without restricting each layer to specific tools.
The first two layers can easily be solved by classical methods for LTL model checking on finite systems~\cite{baier2008principles} and graph searches~\cite{cormen2009introduction}, respectively.
For the third layer, we consider the recent abstraction refinement approach in~\cite{meyer2017nahs} which is applicable to any system associated with a method to over-approximate its finite-time reachable sets.
The second contribution of this paper is thus the definition of a new reachability analysis method relying on the monotonicity property~\cite{smith_monotone} but applicable to any continuously differentiable system without any monotonicity assumption.

This paper is structured as follows.
In Section~\ref{sec related work}, our main contributions are compared to existing work in the literature.
Section~\ref{sec preliminaries} formulates the control problem and introduces the 3-layer hierarchical decomposition of LTL control problems on nonlinear dynamical systems.
The new reachability analysis for any continuously differentiable system is presented in Section~\ref{sec refinement} alongside an overview of the abstraction refinement algorithm in which it is used.
Finally, Section~\ref{sec simulation} presents a numerical implementation of the proposed approach to a motion planning problem for a unicycle robot with disturbances.

\section{Related work}
\label{sec related work}
The abstraction method in the third step uses a finite-time reachability analysis of the dynamical system to compute the non-deterministic transitions of the abstraction.
In this paper, we propose a new reachability analysis approach relying on a monotonicity property~\cite{smith_monotone} but applicable to dynamical systems which are not monotone.
A first abstraction-based control approach relying on the monotonicity property was introduced in~\cite{moor2002abstraction} for monotone systems and then extended in~\cite{coogan2015mixed} for the larger class of mixed-monotone systems.
An extension of the sufficient conditions for mixed-monotonicity from~\cite{coogan2015mixed} to any continuously differentiable system was recently introduced in~\cite{yang2017note} for another type of abstraction~\cite{yang2017fuel}.
Starting from systems satisfying the mild conditions in~\cite{yang2017note}, our contribution is to define a new finite-time reachability analysis approach inspired by, yet strictly more general than, the one in~\cite{coogan2015mixed}, thus opening the use of monotonicity-based abstraction approaches to any continuously differentiable system.

To further compare the proposed approach with the previously mentioned works, we can first note that the introduction of the intermediate layer in our hierarchical decomposition allows the consideration of more general control objectives than PESSOA~\cite{pessoa}, CoSyMa~\cite{cosyma} or SCOTS~\cite{scots} by translating a general LTL specification into a sequence of reachability problems.
In addition, the new contribution on monotonicity-based reachability analysis for any continuously differentiable nonlinear system opens this approach to a much wider class of systems than those considered in TuLiP~\cite{tulip}, conPAS2~\cite{conpas2} and LTLMoP~\cite{ltlmop}.
Regarding other tools also covering nonlinear systems, it should be noted that PESSOA~\cite{pessoa} does not natively handle those systems and requires the user to manually provide a Matlab function computing an over-approximation of the reachable set, while an over-approximation method is included by default in our approach.
The consideration of nonlinear systems in CoSyMa~\cite{cosyma} is based on an incremental stability assumption, which is relaxed in this paper.
Finally, SCOTS~\cite{scots} simply uses a different over-approximation method based on Lipschitz arguments to create a growth bound on the nonlinear system's reachable set.
Although we compare our contributions to existing tools as they are good indicators of the generality of results that can be covered in this field, this paper mainly focuses on providing the initial theoretical results and structure for a possible future development of a general and fully reusable tool.

Among other relevant work, 2-layer top-down structures are proposed in~\cite{kress2009temporal,wongpiromsarn2009receding} for fully actuated and piecewise affine systems respectively, where the first layer of the present paper is skipped to look directly for a discrete plan satisfying the LTL formula in the partitioned environment, then a continuous controller is designed to realize this discrete plan.
Similarly, the 3-layer top-down decomposition mentioned in~\cite{belta2007symbolic} also skips the first layer but splits the second one in two components: first finding all discrete plans satisfying the LTL formula in the partitioned environment without obstacle, then picking an optimal plan based on obstacle avoidance.
The third layer in~\cite{belta2007symbolic} uses a deterministic abstraction approach similar to~\cite{gol2013time} to implement this discrete plan on a robot modeled by an affine system.
Another 3-step hierarchical decomposition of an LTL control problem is presented in~\cite{fainekos2007hierarchical}, but for a bottom-up decomposition whose steps are significantly different from our approach as they consist in first abstracting the dynamical system into a fully actuated model (similarly to our second step), then robustifying the specification to compensate for the mismatches with the initial system and finally solving the new LTL problem on the robust specification.

As a summary, the main theoretical contributions of the proposed framework compared to existing tools for abstraction-based synthesis and other multi-layer approaches is the ability to handle general LTL control problems (layer $1$) on any nonlinear system (new result on reachability analysis combined with abstraction refinement in layer $3$) at a reduced computational cost (decomposing the LTL planning in two steps with layer $2$).
Such general problems cannot be handled by any of the approaches mentioned above.

\section{Hierarchical decomposition of LTL problems}
\label{sec preliminaries}
Let $\N$, $\R$, $\R_0^+$ and $\R_0^-$ be the sets of positive integers, reals, non-negative reals and non-positive reals, respectively.
For $a,b\in\R^n$, the interval $[a,b]\subseteq\R^n$ is defined as $[a,b]=\{z\in\R^n~|~a\leq z\leq b\}$ using componentwise inequalities.

\subsection{System description}
\label{sub prelim system}
We consider a class of continuous-time nonlinear control systems subject to disturbances and modeled by:
\begin{equation}
\label{eq system}
\dot z = f(z,u,d),
\end{equation}
where $z\in\mathcal{Z}\subseteq\R^n$, $u\in\mathcal{U}\subseteq\R^p$ and $d\in\mathcal{D}\subseteq\R^q$ are the state, bounded control input and bounded disturbance input, respectively.
Throughout this paper, the vector field $f$ of (\ref{eq system}) is assumed to be continuously differentiable.
We denote as $\Phi(t,z,\mathbf{u},\mathbf{d})$ the state (assumed to exist and be unique) reached by (\ref{eq system}) at time $t\in\R_0^+$ from initial state $z\in\mathcal{Z}$, under the piecewise continuous control $\mathbf{u}:\mathbb{R}_0^+\rightarrow\mathcal{U}$ and disturbance functions $\mathbf{d}:\mathbb{R}_0^+\rightarrow\mathcal{D}$.
We use $\Phi(t,z,u,d)$ with $u\in\mathcal{U}$ and $d\in\mathcal{D}$ in the case of constant input functions $\mathbf{u}:\mathbb{R}_0^+\rightarrow\{u\}$ and $\mathbf{d}:\mathbb{R}_0^+\rightarrow\{d\}$.

Given a sampling period $\tau\in\R_0^+$, a sampled version of system (\ref{eq system}) can be described as a non-deterministic (due to the disturbance) infinite transition system $S_\tau=(X_\tau,U_\tau,\delta_\tau)$ where: 
\ifdouble
  $X_\tau=\mathcal{Z}$ is the set of states;
  $U_\tau=\mathcal{U}$ is the set of control inputs;
  the transition relation $\delta_\tau:X_\tau\times U_\tau\rightarrow X_\tau$ is such that $z'\in\delta_\tau(z,u)$ if there exists a disturbance $\mathbf{d}:[0,\tau]\rightarrow\mathcal{D}$ such that $z'=\Phi(\tau,z,u,\mathbf{d})$, i.e.\ $z'$ can be reached from $z$ exactly in time $\tau$ by applying the constant control $u$ on $[0,\tau]$.
\else
  \begin{itemize}
  \item $X_\tau=\mathcal{Z}$ is the set of states,
  \item $U_\tau=\mathcal{U}$ is the set of control inputs,
  \item the transition relation $\delta_\tau:X_\tau\times U_\tau\rightarrow X_\tau$ is such that $z'\in\delta_\tau(z,u)$ if there exists a disturbance $\mathbf{d}:[0,\tau]\rightarrow\mathcal{D}$ such that $z'=\Phi(\tau,z,u,\mathbf{d})$, i.e.\ $z'$ can be reached from $z$ exactly in time $\tau$ by applying the constant control $u$ on $[0,\tau]$.
  \end{itemize}
\fi
While, to the best of our knowledge, there exists very few results involving the choice of the sampling period for abstraction-based approaches~\cite{boskos2015decentralized}, some guidelines are provided in Section~\ref{subsub simu layer3} for a unicycle model and in~\cite{meyer2017ifac} for systems with additive control input ($\dot z=f(z,d)+u$).

\subsection{Hierarchical decomposition of an LTL control problem}
\label{sub prelim hierarchical}
We consider a high-level control problem on the sampled-time system $S_\tau$ evolving in the workspace $X_\tau=\mathcal{Z}\subseteq\R^n$ associated to a uniform partition $\mathcal{P}\subseteq 2^\mathcal{Z}$ into intervals (for compatibility with the reachability analysis introduced in Section~\ref{sub ar reachability}).
The description of this workspace also includes a set $Obs\subseteq\mathcal{P}$ of unsafe regions (referred to as \emph{obstacles} in this section) and a set $\Pi\subseteq\mathcal{P}\backslash Obs$ of \emph{regions of interest}.
The control specification is described by a Linear Temporal Logic (LTL) formula $\varphi$ defined over the set of regions of interest $\Pi$.
The reader is referred to~\cite{baier2008principles} for an introduction on the LTL framework.
We thus aim at solving the following problem.
\begin{problem}
\label{pb LTL}
Find a controller $C:X_\tau\rightarrow U_\tau$ such that the closed-loop sampled system $S_\tau$ with transitions $\delta_\tau(z,C(z))$ satisfies the LTL formula $\varphi$ while avoiding the obstacles $Obs\subseteq\mathcal{P}$.
\end{problem}

To solve Problem~\ref{pb LTL}, we propose a hierarchical control structure involving three different abstraction layers of the dynamical system and its environment, each of which successively addresses one aspect of the control problem as sketched in Figure~\ref{fig hierarchy}.
The evolution of the control objectives (highlighted in red in Figure~\ref{fig hierarchy}) is obtained through the following three steps, each applied on a different abstraction layer (in blue).

\begin{figure}[htb]
\centering
\includegraphics[width=\columnwidth]{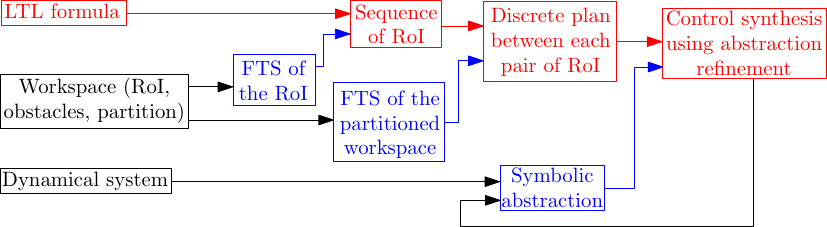}
\caption{Hierarchical structure of the problem solution (RoI $=$ \emph{Regions of Interest}, FTS $=$ \emph{Finite Transition System}).}
\label{fig hierarchy}
\end{figure}

\subsubsection{LTL planning}
\label{subsub prelim LTL}
We first solve the LTL planning problem on a transition system $S_{\Pi}=(\Pi,\delta_{\Pi})$ whose states are the regions of interest in the finite set $\Pi\subseteq\mathcal{P}\backslash Obs$ and its transition relation is $\delta_{\Pi}\subseteq\Pi^2$.
The user can freely choose between defining $\delta_{\Pi}=\Pi^2$ (i.e.\ each region of interest can reach any other) to disregard the workspace geometry or manually creating $\delta_{\Pi}\varsubsetneq\Pi^2$ to consider physical constraints.
Any standard LTL model checker (see e.g.~\cite{holzmann2003spin}) can then be used with $S_\Pi$ and $\varphi$ to obtain an \emph{accepting path} in $S_\Pi$ satisfying the LTL formula $\varphi$.
Let $\bar\pi=\pi^0\pi^1\pi^2\dots$ denote this accepting path (if it exists) as a (possibly infinite) sequence of regions of interest in $\Pi$.

\subsubsection{Discrete plan and obstacle avoidance}
\label{subsub prelim plan}
Next we focus on obtaining a discrete plan in the workspace partition $\mathcal{P}$ realizing the accepting path $\bar\pi$.
Let $\mathcal{N}(\sigma)\subseteq\mathcal{P}$ be the set of neighbors of a cell $\sigma\in\mathcal{P}$ (i.e.\ the partition elements having a common facet with $\sigma$), with the assumption that $\sigma\in\mathcal{N}(\sigma)$.
The second abstraction layer thus describes possible motion in the physical environment while disregarding the system dynamics and is represented by the transition system $S_{\mathcal{P}}=(\mathcal{P},\delta_{\mathcal{P}})$ whose set of states (or \emph{cells}) is $\mathcal{P}$ and its transition relation $\delta_{\mathcal{P}}\subseteq\mathcal{P}^2$ is such that $(\sigma,\sigma')\in\delta_{\mathcal{P}}$ for all $\sigma\in\mathcal{P}\backslash Obs$, $\sigma'\in\mathcal{N}(\sigma)\backslash Obs$.
As a result, any behavior of $S_\mathcal{P}$ induced by the above transition relation $\delta_\mathcal{P}$ is guaranteed to satisfy the obstacle avoidance.

Then for each pair $(\pi^i,\pi^{i+1})\in \Pi^2$ of consecutive regions of interest in the accepting path $\bar\pi$, we look for a plan $\Lambda^i=\sigma^i_0\sigma^i_1\dots\sigma^i_{r_i}$ in $\mathcal{P}\backslash Obs$ connecting the two cells $\sigma^i_0=\pi^i$ and $\sigma^i_{r_i}=\pi^{i+1}$.
Since $\Pi$ is finite, there is necessarily a finite number of such pairs appearing in an infinite accepting path $\bar\pi$, which means that even an LTL formula with infinitely repeating properties is translated into a finite set of finite plans.
The search for the plan $\Lambda^i=\sigma^i_0\sigma^i_1\dots\sigma^i_{r_i}$ is done through classical graph search algorithms on $S_{\mathcal{P}}$ (see e.g.~\cite{cormen2009introduction}), such as a Breadth-First Search or a Dijkstra algorithm to consider weighted transitions (e.g.\ to penalize transitions going to a cell neighboring an obstacle), which are guaranteed to find such plans $\Lambda^i$ as long as $\pi^i$ and $\pi^{i+1}$ can be connected in $S_\mathcal{P}$.

\begin{remark}
\label{rmk roi}
$\Pi\subseteq\mathcal{P}$ ensures that each pair $(\pi^i,\pi^{i+1})$ in $\bar\pi$ only needs one plan $\Lambda^i$ as above.
The more general case with $\Pi\subseteq 2^\mathcal{P}$ can be handled at a greater computational cost, since a plan in $\mathcal{P}$ needs to be found for each pair in $\pi^i\times\pi^{i+1}\subseteq\mathcal{P}^2$.
\end{remark}

\subsubsection{Control synthesis}
\label{subsub prelim synthesis}
After applying the first two layers as above, Problem~\ref{pb LTL} can then be solved by obtaining a solution to the following problem for each plan $\Lambda^i$ from Section~\ref{subsub prelim plan}.
\begin{problem}
\label{pb refinement}
Given a plan $\Lambda^i=\sigma^i_0\sigma^i_1\dots\sigma^i_{r_i}$ in $\mathcal{P}$, find a controller $C^i:X_\tau\rightarrow U_\tau$ such that the closed-loop sampled system $S_\tau$ follows this plan, i.e.\ for any trajectory $z_0\dots z_{r_i}$ of $S_\tau$ with $z_0\in\sigma^i_0$ and $z_{k+1}\in\delta_\tau(z_k,C^i(z_k))$ for all $k\in\{0,\dots,r_i-1\}$, it holds that $z_k\in\sigma^i_k$.
\end{problem}

As for the other layers, any tool able to solve Problem~\ref{pb refinement} can be used in this third layer.
One possible example is the approach in~\cite{gol2013time} where an abstraction of $S_\tau$ is created as a finite transition system obtained by designing feedback controllers deterministically driving the system between any two neighbor cells in $\mathcal{P}$.
While the obtained deterministic abstraction leads to a straightforward solution for Problem~\ref{pb refinement}, such approach is limited to the class of multi-affine systems.

For the highest generality of the proposed hierarchical structure, the suggested solution for the third layer (detailed in Section~\ref{sec refinement}) is another abstraction-based approach relying on reachability analysis of system (\ref{eq system}) and which has the advantage of being applicable to any continuously differentiable system.
Unlike~\cite{gol2013time} above, the drawback of such approach is that the obtained abstraction is a non-deterministic transition system (see e.g.~\cite{coogan2015mixed}) which is thus unlikely to deterministically follow the plan $\Lambda^i$ in Problem~\ref{pb refinement}.
Instead of manually looking for abstractions of finer granularity on which the control problem is feasible, we consider an abstraction refinement approach as in~\cite{meyer2017nahs} where an abstraction is created on the initial coarse partition $\mathcal{P}$ and then iteratively refined by re-partitioning the cells where the control synthesis fails.

\subsubsection{General comments}
\label{subsub prelim comments}
The main contribution of this section is the new $3$-layer hierarchical framework to solve Problem~\ref{pb LTL} for nonlinear systems under general LTL specifications.
However as mentioned above, each layer can be tackled by any existing tool able to address the corresponding subproblem, and the examples mentioned in this section are not claimed to be the optimal choices nor to be new contributions of this paper.
Unlike the first two layers relying on well established tools, the abstraction refinement approach proposed for the third layer is a more recent result~\cite{meyer2017nahs}, and although its algorithm is not new, its applicability to any continuously differentiable system is the second main contribution of this paper which we thus describe in more detail in Section~\ref{sec refinement}.

The proposed structure combining the first and second layers has the advantage of providing a solution to the LTL planning on $\mathcal{P}$ at a significantly lower computational cost than if this problem were to be solved in a single step, directly on $S_{\mathcal{P}}$.
As will be seen in the numerical example of Section~\ref{sec simulation}, the main computational bottleneck is on the control synthesis in the third layer, which is the trade-off for the generality offered by the proposed abstraction refinement approach.
As mentioned above, the computational complexity of layer $3$ can be reduced at the cost of generality by using more efficient tools which are only applicable to smaller classes of systems.

Finally, we provide some guidelines on how to handle infeasibility of each layer's subproblem.
In layer $1$, if the LTL specification $\varphi$ is infeasible, \emph{specification revision} methods~\cite{kim2015minimal} can be considered to find a new specification satisfiable by $S_\Pi$ and as close to $\varphi$ as possible.
In layer $2$, if regions of interest $\pi^i,\pi^j\in \Pi$ cannot be connected in $\mathcal{P}\setminus Obs$, the first abstraction layer $S_\Pi$ needs to be updated with the physical constraints: $(\pi^i,\pi^j)\notin\delta_\Pi$ and $(\pi^j,\pi^i)\notin\delta_\Pi$.
In layer $3$, if we fail to synthesize a controller within reasonable refinement iterations, the abstraction refinement algorithm can be combined with the \emph{plan revision} approach in~\cite{meyer2017ifac} to look for alternative plans $\Lambda^i$.

\section{Generalized abstraction refinement}
\label{sec refinement}
The abstraction refinement approach considered in the third layer and initiated in~\cite{meyer2017nahs} is applicable to any system (\ref{eq system}) whose finite-time reachable sets can be efficiently over-approximated.
In Section~\ref{sub ar reachability}, we thus introduce the second main contribution of this paper: a new reachability analysis approach applicable to any continuously differentiable system (\ref{eq system}).
For self-containment of the paper, Section~\ref{sub ar refinement} then provides an overview of the abstraction refinement algorithm from~\cite{meyer2017nahs}.

\subsection{General monotonicity-based reachability analysis}
\label{sub ar reachability}
Monotone systems are systems whose trajectories preserve some partial orders as below.
A formal definition of a partial order is omitted in this paper but can be found in~\cite{angeli_monotone}.
\begin{definition}
\label{def monotone}
System (\ref{eq system}) is monotone with respect to partial orders $\preceq_z$, $\preceq_u$ and $\preceq_d$ on the state, control and disturbance inputs respectively, if for all time $t\in\R_0^+$, initial states $z,z'\in\mathcal{Z}$, control functions $\mathbf{u},\mathbf{u}':[0,t]\rightarrow\mathcal{U}$ and disturbance functions $\mathbf{d},\mathbf{d}':[0,t]\rightarrow\mathcal{D}$ we have:
\ifdouble
  ${z\preceq_z z'}, {\mathbf{u}\preceq_u \mathbf{u}'}, {\mathbf{d}\preceq_d \mathbf{d}'}\Rightarrow \Phi(t,z,\mathbf{u},\mathbf{d})\preceq_z\Phi(t,z',\mathbf{u}',\mathbf{d}')$.
\else
  $$z\preceq_z z',\mathbf{u}\preceq_u \mathbf{u}',\mathbf{d}\preceq_d \mathbf{d}'\Rightarrow \Phi(t,z,\mathbf{u},\mathbf{d})\preceq_z\Phi(t,z',\mathbf{u}',\mathbf{d}').$$
\fi
\end{definition}

In this section, we provide a new reachability analysis approach relying on the monotonicity property but without any monotonicity assumption on (\ref{eq system}).
The recent results in~\cite{yang2017note} extend the sufficient conditions for mixed-monotonicity (see e.g.~\cite{coogan2015mixed}) to any continuous-time system whose Jacobian matrices are bounded over the considered sets of states and inputs.
Our new contribution in this section starts from systems satisfying the very mild sufficient conditions in~\cite{yang2017note} and defines a new reachability analysis approach which is inspired by but is strictly more general than the one in~\cite{coogan2015mixed}.

In what follows, several steps need to be identically applied to all three variables $z\in\mathcal{Z}\subseteq\R^n$, $u\in\mathcal{U}\subseteq\R^p$ and $d\in\mathcal{D}\subseteq\R^q$.
When this is the case, we will use generic notations with variable $c\in\{z,u,d\}$ and dimension $m\in\{n,p,q\}$ such that $c\in\R^m$.
We first denote as $a^c_{ij}$ and $b^c_{ij}$ the bounds of the partial derivatives of the vector field $f$ as follows: for all $z\in\mathcal{Z}$, $u\in\mathcal{U}$, $d\in\mathcal{D}$, $i\in\{1,\dots,n\}$ and $j\in\{1,\dots,m\}$,
$\frac{\partial f_i}{\partial c_j}(z,u,d)\in[a^c_{ij},b^c_{ij}]$.
The values of these bounds lead us to consider the 4 cases below, covering all possibilities for the sign of each partial derivative, as in~\cite{yang2017note}:
\begin{enumerate}[leftmargin=3em]
\renewcommand{\labelenumi}{(C\arabic{enumi})}
\item $a^c_{ij}\geq 0$: positive,
\item $a^c_{ij}\leq 0\leq b^c_{ij}$ and $|a^c_{ij}|\leq|b^c_{ij}|$: mostly positive,
\item $a^c_{ij}\leq 0\leq b^c_{ij}$ and $|a^c_{ij}|\geq|b^c_{ij}|$: mostly negative,
\item $b^c_{ij}\leq 0$: negative.
\end{enumerate}

We then define the function $g$ such that for all $z,z^*\in\mathcal{Z}$, $u,u^*\in\mathcal{U}$, $d,d^*\in\mathcal{D}$ and $i\in\{1,\dots,n\}$ we have
\ifdouble
  \begin{IEEEeqnarray}{rl}
  g_i(z,u,d,z^*,u^*,d^*)=&~f_i(Z_i,U_i,D_i) + \alpha_i^z(z-z^*)\label{eq decomposition}\\
  &+~\alpha_i^u(u-u^*) + \alpha_i^d(d-d^*),\nonumber
  \end{IEEEeqnarray}
\else
  \begin{equation}
  \label{eq decomposition}
  g_i(z,u,d,z^*,u^*,d^*)=f_i(Z_i,U_i,D_i) + \alpha_i^z(z-z^*) + \alpha_i^u(u-u^*) + \alpha_i^d(d-d^*),
  \end{equation}
\fi
where the components of $Z_i=(z_{i1} \dots z_{in})^\top$, $U_i=(u_{i1} \dots u_{ip})^\top$, $D_i=(d_{i1} \dots d_{iq})^\top$ and $\alpha_i^c=(\alpha^c_{i1} \dots \alpha^c_{im})$ are defined according to cases (C1)-(C4) for $\frac{\partial f_i}{\partial c_j}$ as follows with $c\in\{z,u,d\}$ in all notations below:
\ifdouble
  $$c_{ij}=\begin{cases}
  c_j & \text{if (C1) or (C2),}\\
  c_j^* & \text{if (C3) or (C4),}\end{cases}
  \quad
  \alpha^c_{ij}=\begin{cases}
  -a^c_{ij} & \text{if (C2),}\\
  b^c_{ij} & \text{if (C3),}\\
  0 & \text{otherwise.}\end{cases}$$
\else
  $$c_{ij}=\begin{cases}
  c_j & \text{if (C1) or (C2),}\\
  c_j^* & \text{if (C3) or (C4),}\end{cases}
  \qquad\text{and}\qquad
  \alpha^c_{ij}=\begin{cases}
  -a^c_{ij} & \text{if (C2),}\\
  b^c_{ij} & \text{if (C3),}\\
  0 & \text{otherwise.}\end{cases}$$
\fi

The above definition of $g$ is a straightforward extension, to non-autonomous systems with control and disturbance inputs, of the one introduced in~\cite{yang2017note}.
In what follows, we provide our main contribution on this topic which describes how to compute an interval over-approximation of the finite-time reachable set for any continuously differentiable system (\ref{eq system}), without needing any additional assumption.
For this we define the following dynamical system evolving in $\mathcal{Z}^2$:
\begin{equation}
\label{eq system duplicated}
\begin{pmatrix}
\dot z\\ \dot z^*
\end{pmatrix}
=h(z,u,d,z^*,u^*,d^*)=
\begin{pmatrix}
g(z,u,d,z^*,u^*,d^*)\\g(z^*,u^*,d^*,z,u,d)
\end{pmatrix}
\end{equation}
and similarly to $\Phi$ we denote the trajectories of (\ref{eq system duplicated}) as $\Phi_h(\cdot,z,\mathbf{u},\mathbf{d},z^*,\mathbf{u^*},\mathbf{d^*}):\R_0^+\rightarrow\mathcal{Z}^2$, where bold variables are piecewise continuous input functions.
Let $\Phi_h^1$ and $\Phi_h^2$ denote the first $n$ and last $n$ components of $\Phi_h$, respectively.
Then, a single successor of (\ref{eq system duplicated}) can be used to compute an over-approximation of the finite-time reachable set of (\ref{eq system}) as follows.
\begin{theorem}
\label{th over approximation}
For all bounds $\underline{z},\overline{z}\in\R^n$, $\underline{u},\overline{u}\in\R^p$, $\underline{d},\overline{d}\in\R^q$ and for all $t\in\R_0^+$, $z\in[\underline{z},\overline{z}]$, $\mathbf{u}:[0,t]\rightarrow[\underline{u},\overline{u}]$ and $\mathbf{d}:[0,t]\rightarrow[\underline{d},\overline{d}]$ we have (using componentwise inequalities):
$$\Phi_h^1(t,\underline{z},\underline{u},\underline{d},\overline{z},\overline{u},\overline{d})\leq
\Phi(t,z,\mathbf{u},\mathbf{d})\leq
\Phi_h^2(t,\underline{z},\underline{u},\underline{d},\overline{z},\overline{u},\overline{d}).$$
\end{theorem}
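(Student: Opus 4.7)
My plan is to reduce the claim to a monotonicity argument on the extended system (\ref{eq system duplicated}), using the decomposition properties of $g$ already established by Proposition~\ref{prop decomposition}. The key intuition is that the $2n$-dimensional system with right-hand side $h$ is, by construction, monotone with respect to the ``southeast'' partial order on $\mathcal{Z}\times\mathcal{Z}$ defined by $(z,z^*)\sqsubseteq(z',z'^*)\iff z\leq z'\text{ and }z^*\geq z'^*$ (and analogously on inputs, by duplicating $\mathcal{U}$ and $\mathcal{D}$). Indeed, the monotonicity of $g$ in its first three (resp.\ last three) arguments directly implies that the off-diagonal block of the Jacobian of $h$ that couples $z$ with $z^*$ (and their input counterparts) has the sign pattern required by Kamke's condition in the southeast order. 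Therefore $h$ satisfies a monotonicity property in the sense of Definition~\ref{def monotone}, but with respect to this twisted order.

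Next I would exploit two specific facts. First, the diagonal $z=z^*$, $u=u^*$, $d=d^*$ is invariant under (\ref{eq system duplicated}): by the embedding property $g(z,u,d,z,u,d)=f(z,u,d)$, so on the diagonal both halves of $h$ coincide and equal $f$. Consequently, for any $z\in[\underline{z},\overline{z}]$, any $\mathbf{u}:[0,t]\to[\underline{u},\overline{u}]$ and any $\mathbf{d}:[0,t]\to[\underline{d},\overline{d}]$, the curve $t\mapsto(\Phi_f(t,z,\mathbf{u},\mathbf{d}),\Phi_f(t,z,\mathbf{u},\mathbf{d}))$ is a trajectory of (\ref{eq system duplicated}) under the ``duplicated'' inputs $(\mathbf{u},\mathbf{u})$ and $(\mathbf{d},\mathbf{d})$ from the initial condition $(z,z)$. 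Second, in the southeast order we have the relations
\begin{equation*}
(\underline{z},\overline{z})\sqsubseteq(z,z),\qquad (\underline{u},\overline{u})\sqsubseteq(\mathbf{u}(\cdot),\mathbf{u}(\cdot)),\qquad (\underline{d},\overline{d})\sqsubseteq(\mathbf{d}(\cdot),\mathbf{d}(\cdot)),
\end{equation*}
because the constant lower/upper envelopes bracket the time-varying signals componentwise.

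Applying the monotonicity of the flow of (\ref{eq system duplicated}) (Definition~\ref{def monotone} lifted to the southeast order, which follows from Proposition~\ref{prop decomposition} by the standard Kamke argument) to these two comparable initial data and input signals, I obtain
\begin{equation*}
\Phi_h\bigl(t,\underline{z},\underline{u},\underline{d},\overline{z},\overline{u},\overline{d}\bigr)\sqsubseteq \bigl(\Phi_f(t,z,\mathbf{u},\mathbf{d}),\,\Phi_f(t,z,\mathbf{u},\mathbf{d})\bigr).
\end{equation*}
Unpacking the southeast order on the two blocks gives exactly $\Phi_h^1\leq\Phi_f$ and $\Phi_f\leq\Phi_h^2$, which is the desired interval over-approximation.

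The main obstacle I anticipate is justifying cleanly the monotonicity of the flow of (\ref{eq system duplicated}) with respect to the twisted order, especially for \emph{time-varying} inputs, since Definition~\ref{def monotone} is stated in the forward direction but Proposition~\ref{prop decomposition} only gives pointwise sign conditions on partial derivatives of $g$. This is the content of the ``extension'' alluded to just before the theorem statement, based on~\cite[Theorem~1]{yang2017note}; concretely it requires a comparison-theorem argument (e.g., approximating the time-varying inputs by piecewise-constant ones and using continuous dependence, or directly invoking a Müller-type theorem for cooperative differential inequalities). Once this is granted, the rest of the proof is essentially a relabelling of the twisted-order comparison, so this verification of the comparison principle is the only nontrivial technical step.
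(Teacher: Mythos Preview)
Your proposal is correct and follows essentially the same approach as the paper: establish monotonicity of the duplicated system (\ref{eq system duplicated}) with respect to the southeast order (the paper does this via the sign conditions of Proposition~\ref{prop monotone}, citing~\cite{angeli_monotone} for the passage to time-varying inputs), use the embedding property to identify the diagonal trajectory with $(\Phi_f,\Phi_f)$, and compare with the extremal initial/input data $(\underline{z},\overline{z})$, $(\underline{u},\overline{u})$, $(\underline{d},\overline{d})$. Your unpacking of the single inequality $\Phi_h(t,\underline{z},\underline{u},\underline{d},\overline{z},\overline{u},\overline{d})\sqsubseteq(\Phi_f,\Phi_f)$ is in fact slightly cleaner than the paper, which also invokes the second comparison $(c,c)\preceq_c(\overline{c},\underline{c})$ and a symmetry argument to extract the upper bound.
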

\begin{proof}
From the terms $\alpha_i^c(c-c^*)$ in (\ref{eq decomposition}), we can show similarly to~\cite{yang2017note} that for all variables $c\in\{z,u,d\}$ (with $c\in\R^m$, $m\in\{n,p,q\}$), $i\in\{1,\dots,2n\}$ and $j\in\{1,\dots,m\}$ we have
$$
\frac{\partial h_i}{\partial c_j}
\begin{cases}
\geq 0 & \text{if } i\leq n,\\
\leq 0 & \text{if } i\geq n,
\end{cases}
\qquad\text{and}\qquad
\frac{\partial h_i}{\partial c_j^*}
\begin{cases}
\leq 0 & \text{if } i\leq n,\\
\geq 0 & \text{if } i\geq n.
\end{cases}
$$

Then from~\cite{angeli_monotone}, (\ref{eq system duplicated}) is monotone as in Definition~\ref{def monotone} with the partial orders $\preceq_z$, $\preceq_u$ and $\preceq_d$ on the spaces $\R^{2n}$, $\R^{2p}$ and $\R^{2q}$, respectively, as defined below.
For all variables $c\in\{z,u,d\}$ with $c\in\R^m$, the partial order $\preceq_c$ is characterized by the orthant $(\R_0^+)^m\times(\R_0^-)^m$ of space $\R^{2m}$ as follows: 
for all $c^1,c^2,c^3,c^4\in\R^m$, 
$\begin{pmatrix}c^1 \\ c^2\end{pmatrix}
\preceq_c
\begin{pmatrix}c^3 \\ c^4\end{pmatrix}
\Leftrightarrow
\begin{cases}
c^1\leq c^3,\\
c^2\geq c^4,
\end{cases}$
where $\leq$ and $\geq$ are the componentwise inequalities on $\R^m$.
For all $c\in[\underline{c},\overline{c}]\subseteq\R^m$ we thus have
$\begin{pmatrix}\underline{c} \\ \overline{c}\end{pmatrix}
\preceq_c
\begin{pmatrix}c \\ c\end{pmatrix}
\preceq_c
\begin{pmatrix}\overline{c} \\ \underline{c}\end{pmatrix}$
and we can then use Definition~\ref{def monotone} for system (\ref{eq system duplicated}) to obtain the following over-approximation: for all $t\in\R_0^+$, $z\in[\underline{z},\overline{z}]$, $\mathbf{u}:[0,t]\rightarrow[\underline{u},\overline{u}]$ and $\mathbf{d}:[0,t]\rightarrow[\underline{d},\overline{d}]$ we have
\ifdouble
  $$\begin{cases}
  \Phi_h(t,\underline{z},\underline{u},\underline{d},\overline{z},\overline{u},\overline{d})\preceq_z\Phi_h(t,z,\mathbf{u},\mathbf{d},z,\mathbf{u},\mathbf{d}),\\
  \Phi_h(t,z,\mathbf{u},\mathbf{d},z,\mathbf{u},\mathbf{d})\preceq_z\Phi_h(t,\overline{z},\overline{u},\overline{d},\underline{z},\underline{u},\underline{d}).
  \end{cases}$$
\else
  $$\Phi_h(t,\underline{z},\underline{u},\underline{d},\overline{z},\overline{u},\overline{d})\preceq_z
  \Phi_h(t,z,\mathbf{u},\mathbf{d},z,\mathbf{u},\mathbf{d})\preceq_z
  \Phi_h(t,\overline{z},\overline{u},\overline{d},\underline{z},\underline{u},\underline{d}).$$
\fi
From (\ref{eq decomposition}), $g(z,u,d,z,u,d)=f(z,u,d)$ which implies $\Phi_h(t,z,\mathbf{u},\mathbf{d},z,\mathbf{u},\mathbf{d})=\begin{pmatrix}\Phi(t,z,\mathbf{u},\mathbf{d}) \\ \Phi(t,z,\mathbf{u},\mathbf{d})\end{pmatrix}$.
Then by symmetry of (\ref{eq system duplicated}), we have $\Phi_h^1(t,\overline{z},\overline{u},\overline{d},\underline{z},\underline{u},\underline{d})=\Phi_h^2(t,\underline{z},\underline{u},\underline{d},\overline{z},\overline{u},\overline{d})$ finally giving the result in Theorem~\ref{th over approximation}.
\end{proof}

Theorem~\ref{th over approximation} thus provides a method to obtain over-approximations of the finite-time reachable sets for any continuously differentiable system (\ref{eq system}) by computing a single successor state $\Phi_h(t,\underline{z},\underline{u},\underline{d},\overline{z},\overline{u},\overline{d})$ of system (\ref{eq system duplicated}).

\subsection{Abstraction refinement algorithm}
\label{sub ar refinement}
For each plan $\Lambda^i$ obtained in the second layer (Section~\ref{subsub simu layer2}), we abstract the sampled system $S_\tau$ by the finite transition system $S_a^i=(X_a^i,U_a,\delta_a^i)$, where:
\ifdouble
  the set of states (or \emph{symbols}) $X_a^i$ is a partition of the workspace $\mathcal{Z}\subseteq\R^n$ into intervals, i.e.\ any symbol $s\in X_a^i$ is also an interval $s=[\underline{s},\overline{s}]\subseteq\mathcal{Z}$ of the workspace;
  the set of inputs $U_a$ is a finite subset of control values in $\mathcal{U}$;
  a transition $s'\in\delta_a^i(s,u)$ between symbols $s\in X_a^i$ and $s'\in X_a^i$ with input $u\in U_a$ exists if $s'\cap[\Phi_h^1(\tau,\underline{s},u,\underline{d},\overline{s},u,\overline{d}),\Phi_h^2(\tau,\underline{s},u,\underline{d},\overline{s},u,\overline{d})]\neq\emptyset$.
\else
  \begin{itemize}
  \item the set of states (or \emph{symbols}) $X_a^i$ is a partition of the workspace $\mathcal{Z}\subseteq\R^n$ into intervals, i.e.\ any symbol $s\in X_a^i$ is also an interval $s=[\underline{s},\overline{s}]\subseteq\mathcal{Z}$ of the workspace,
  \item the set of inputs $U_a$ is a finite subset of control values in $\mathcal{U}$,
  \item a transition $s'\in\delta_a^i(s,u)$ between symbols $s\in X_a^i$ and $s'\in X_a^i$ with input $u\in U_a$ exists if $s'\cap[\Phi_h^1(\tau,\underline{s},u,\underline{d},\overline{s},u,\overline{d}),\Phi_h^2(\tau,\underline{s},u,\underline{d},\overline{s},u,\overline{d})]\neq\emptyset$.
  \end{itemize}
\fi

\begin{remark}
\label{rmk bounds}
The over-approximation of the reachable set of (\ref{eq system}) from Theorem~\ref{th over approximation} with the constant control value $u$ used above to define $\delta_a^i$ can also be obtained with the less conservative local bounds for the Jacobians $\frac{\partial f}{\partial z}$ and $\frac{\partial f}{\partial d}$ over the set of possible states only during the time period $[0,\tau]$.
\end{remark}

Since the transition relation $\delta_a^i$ is non-deterministic and the objective of Problem~\ref{pb refinement} is to deterministically follow the plan $\Lambda^i$ in $\mathcal{P}$, we rely on an abstraction refinement approach by considering the initial coarse partition $X_a^i=\mathcal{P}$ which is then iteratively refined by re-partitioning the elements of $X_a^i$ that are responsible for preventing the synthesis of a controller.
Below, we provide an overview of the considered refinement algorithm.
For further details, the reader is referred to~\cite{meyer2017nahs}.

We first define the function $P_a^i:\mathcal{P}\rightarrow 2^{X_a^i}$ such that $P_a^i(\sigma)=\{s\in X_a^i~|~s\subseteq\sigma\}$ corresponds to the projection of a cell $\sigma\in \mathcal{P}$ onto the given finer partition $X_a^i$.
Then, Algorithm~\ref{algo global} is centered around the computation of \emph{valid sets} defined as a function $V^i:\mathcal{P}\rightarrow 2^{X_a^i}$ by proceeding backwards along the plan $\Lambda^i=\sigma^i_0\sigma^i_1\dots\sigma^i_{r_i}$.
The final cell $\sigma^i_{r_i}$ of $\Lambda^i$ is considered as valid and the valid set function is thus initialized with $V^i(\sigma^i_{r_i})=\{\sigma^i_{r_i}\}$.
Other valid sets $V^i(\sigma^i_k)$ are then iteratively defined as the subset of symbols in $\sigma^i_k$ which can be driven towards the valid set $V^i(\sigma^i_{k+1})$ of the next cell for at least one control input in $U_a$:
$V^i(\sigma^i_k)=\left\{s\in P_a^i(\sigma^i_k)~\left|~\exists u\in U_a,~\delta_a^i(s,u)\subseteq V^i(\sigma^i_{k+1})\right.\right\}$.
The controller $C_a^i:X_a^i\rightarrow {U_a}$ is simultaneously defined by associating to each valid symbol $s\in V^i(\sigma^i_k)$ the \emph{first} of such control values.
Given a cell $\sigma^i_k$ and a targeted valid set $V^i(\sigma^i_{k+1})$, the function $\mathtt{ValidSet}(\sigma^i_k,V^i(\sigma^i_{k+1}))$ in Algorithm~\ref{algo global} denotes the computation of $V^i(\sigma^i_k)$ and $C_a^i$ as above.

\begin{algorithm2e}[tbh]
  \SetKwFunction{Pick}{Pick}
  \SetKwFunction{Split}{Split}
  \SetKwFunction{ValidSet}{ValidSet}
  \KwData{$\mathcal{P}$, $\Lambda^i=\sigma^i_0\sigma^i_1\dots\sigma^i_{r_i}\in\mathcal{P}^{r_i+1}$, $P_a^i:\mathcal{P}\rightarrow 2^{X_a^i}$.}
  {\bf Initialization:} $X_a^i=\mathcal{P}$, $V^i(\sigma^i_{r_i})=\{\sigma^i_{r_i}\}$\\
  \For{$k$ from $r_i-1$ to $0$}
  {
    $\{V^i(\sigma^i_k),~C_a^i\}=\ $\ValidSet$(\sigma^i_k,V^i(\sigma^i_{k+1}))$\\
    \While{$V^i(\sigma^i_k)=\emptyset$ or $V^i(\sigma^i_0)\neq P^i_a(\sigma^i_0)$}
    {
      $j=\Pick(k,r_i-1)$\\
      \ForAll{$s\in P_a^i(\sigma^i_j)\backslash V^i(\sigma^i_j)$}{$X_a^i=(X_a^i\backslash \{s\})\cup\Split(s)$}
      \For{$l$ from $j$ to $k$}{$\{V^i(\sigma^i_l),~C_a^i\}=\ $\ValidSet$(\sigma^i_l,V^i(\sigma^i_{l+1}))$}
    }
  }
  \KwOut{$\{X_a^i,~V^i:\mathcal{P}\rightarrow 2^{X_a^i},~C_a^i:X_a^i\rightarrow {U_a}\}$}
\caption{Abstraction refinement algorithm.\label{algo global}}
\end{algorithm2e}

If $V^i(\sigma^i_k)=\emptyset$ for some $k$ (line 4), we first pick a cell $\sigma^i_j$ with $j\in\{k,\dots,r_i-1\}$ to be refined (line 5), split each of its invalid symbols $s\in P_a^i(\sigma^i_j)\backslash V^i(\sigma^i_j)$ into a set of subsymbols $\texttt{Split}(s)$ and update the partition $X_a^i$ accordingly (lines 6-7), and finally update the valid sets and controller for all cells from $\sigma^i_j$ to $\sigma^i_k$ whose valid sets may be expanded after this refinement (lines 8-9).
The refinement procedure is repeated until $V^i(\sigma^i_0)= P_a(\sigma^i_0)$ (line 4), i.e.\ when starting from any subsymbol in $\sigma^i_0$, $S_a^i$ can be controlled to reach $\sigma^i_{r_i}$ exactly in $r_i$ steps.
For each plan $\Lambda^i$, Algorithm~\ref{algo global} then returns the refined partition $X_a^i$, the valid set function $V^i$ and the associated controller $C_a^i$.
The definition of both functions $\texttt{Pick}$ and $\texttt{Split}$ can be arbitrary but some guidelines and examples are provided in~\cite{meyer2017ifac,meyer2017acc,meyer2017nahs} and Section~\ref{subsub simu layer3}.

\begin{lemma}[\cite{meyer2017nahs}]
\label{lemma controller}
If Algorithm~\ref{algo global} terminates for a plan $\Lambda^i$, then the controller $C^i:X_\tau\rightarrow U_\tau$ such that $C^i(z)=C_a^i(s)$ for all $z\in s$ solves Problem~\ref{pb refinement}, i.e.\
the closed loop of $S_\tau$ with transitions $z'\in\delta_\tau(z,C^i(z))$ follows $\Lambda^i$ when starting in $\sigma^i_0$.
\end{lemma}

The proof of Lemma~\ref{lemma controller} in~\cite{meyer2017nahs} is independent of the over-approximation method associated with system (\ref{eq system}).
A solution to the main LTL control problem then immediately follows.
\begin{coro}
\label{coro controller}
If Algorithm~\ref{algo global} terminates for all plans $\Lambda^i$ derived in Section~\ref{subsub prelim plan}, then the controller $C:\N\times X_\tau\rightarrow U_\tau$ defined by $C(i,z)=C^i(z)$ solves Problem~\ref{pb LTL}.
\end{coro}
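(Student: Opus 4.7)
The plan is to reduce the corollary to a straightforward chaining of the per-plan guarantee of Theorem~\ref{th controller}, combined with the semantic guarantees that the first two layers of the hierarchy gave us in Sections~\ref{sub problem LTL} and~\ref{sub problem plan}. The statement of $C(i,z)$ as a function of an integer index $i$ already suggests an induction on $i$, so I would make the switching rule explicit: starting from any $z_0 \in \pi^0$ with $i=0$, apply $C(0,\cdot) = C^0(H^0(\cdot))$ until the closed-loop trajectory lands in $\sigma^0_{r_0} = \pi^1$, then advance to $i=1$, and so on. Because the plans $\Sigma^i$ are defined for every pair $(\pi^i,\pi^{i+1})$ of consecutive regions in the (possibly infinite) accepting path $\bar\pi$, this switching is well-defined for all times.

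The first step of the induction would be to fix an arbitrary $i$ and invoke Theorem~\ref{th controller} on the plan $\Sigma^i = \sigma^i_0\cdots\sigma^i_{r_i}$: since Algorithm~\ref{algo global} is assumed to terminate for $\Sigma^i$, the controller $C^i(z) = C_a^i(H^i(z))$ drives any trajectory of $S_\tau$ starting in $\sigma^i_0 = \pi^i$ through $\sigma^i_0,\sigma^i_1,\ldots,\sigma^i_{r_i} = \pi^{i+1}$ in exactly $r_i$ sampling steps. By construction in Section~\ref{sub problem plan}, all intermediate cells $\sigma^i_k$ belong to $\mathcal{P}\setminus Obs$, so obstacle avoidance is preserved throughout this segment. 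Concatenating the segments produced as $i$ ranges over $\N$ (or over the finite set of distinct pairs, if $\bar\pi$ is ultimately periodic), the full closed-loop trajectory of $S_\tau$ under $C$ stays in $\mathcal{P}\setminus Obs$ and visits $\pi^0,\pi^1,\pi^2,\ldots$ in this order.

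The second step is to lift this region-of-interest visit sequence to satisfaction of $\varphi$. The accepting path $\bar\pi$ was obtained in Section~\ref{sub problem LTL} as a trace of $S_\Pi$ accepted by the Büchi automaton for $\varphi$, so by the classical model-checking semantics recalled in \cite{baier2008principles}, the word $\pi^0\pi^1\pi^2\cdots$ satisfies $\varphi$. Since $\varphi$ is defined over the regions of interest in $\Pi$ and the labeling of the concrete trajectory is entirely determined by which $\pi^j$ it currently lies in, the concrete trace induces the same $\omega$-word and hence also satisfies $\varphi$. Combined with the obstacle-avoidance property from the first step, this establishes that the closed-loop $S_\tau$ under $C$ solves Problem~\ref{pb LTL}.

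I do not expect a real obstacle here, since each ingredient is already supplied: Theorem~\ref{th controller} handles the continuous-to-symbolic refinement on every plan, Section~\ref{sub problem plan} handles obstacle avoidance between consecutive regions, and Section~\ref{sub problem LTL} handles LTL satisfaction at the level of $S_\Pi$. The only point that would require a small amount of care in the write-up is making the index-switching of $C(i,z)$ formally match the trajectory semantics of Problem~\ref{pb LTL} (which asks for a state-feedback $C:X_\tau \to U_\tau$, not an indexed family); this can be handled by noting that $i$ is determined by the execution history and can be absorbed into a state-augmentation, or by a brief remark that $i$ is incremented exactly when $z$ enters $\pi^{i+1}$, so the switching is unambiguous.
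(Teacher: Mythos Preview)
Your proposal is correct and is essentially a careful elaboration of what the paper leaves implicit: the paper gives no separate proof of the corollary, simply stating that it ``immediately follows from Theorem~\ref{th controller}.'' Your chaining of Theorem~\ref{th controller} across the plans $\Sigma^i$, together with the obstacle-avoidance guarantee built into $S_{\mathcal{P}}$ and the LTL-satisfaction guarantee of $\bar\pi$ from $S_\Pi$, is exactly the intended argument, and your remark on the signature mismatch between $C:\N\times X_\tau\to U_\tau$ and the state-feedback form in Problem~\ref{pb LTL} is a valid observation that the paper itself does not address.
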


\section{Application to non-holonomic motion planning}
\label{sec simulation}
In this section, we consider a high-level motion planning problem for a mobile robot evolving in an office environment.
The robot is modeled by disturbed unicycle dynamics:
\begin{equation}
\label{eq unicycle}
\dot z=f(z,u,d)=\begin{pmatrix}
v\cos(\theta)+d_1\\
v\sin(\theta)+d_2\\
\omega+d_3
\end{pmatrix}
\end{equation}
where $z=(x,y,\theta)\in\mathcal{Z}\subseteq\R^3$ is the state (2D position and orientation), $u=(v,\omega)\in\mathcal{U}\subseteq\R^2$ is the control input (linear and angular velocities) and $d=(d_1,d_2,d_3)\in\mathcal{D}\subseteq\R^3$ is the disturbance.
We further assume that the disturbance take its values in an interval $\mathcal{D}=[\underline{d},\overline{d}]$ of $\R^3$.
We first apply the proposed reachability analysis results to the unicycle (\ref{eq unicycle}) in Section~\ref{sub simu reachability}.
Section~\ref{sub simu problem} then describes the considered motion planning problem and the associated simulation results.

\subsection{Reachability analysis}
\label{sub simu reachability}
We first define function $g:\mathcal{Z}\times\mathcal{U}\times\mathcal{D}\times\mathcal{Z}\times\mathcal{U}\times\mathcal{D}\rightarrow\R^3$ as in (\ref{eq decomposition}) for the unicycle model (\ref{eq unicycle}).
Since all partial derivatives of $f_3$ are non-negative ($a^c_{3j}\geq0$ as in (C1) of Section~\ref{sub ar reachability} for all $c\in\{z,u,d\}$), we thus have $\alpha_3^z=\alpha_3^d=\begin{pmatrix}0 & 0 & 0\end{pmatrix}$, $\alpha_3^u=\begin{pmatrix}0 & 0\end{pmatrix}$, $Z_3=z$, $U_3=u$ and $D_3=d$, leading to:
\begin{equation}
\label{eq simu decomposition theta}
g_3(z,u,d,z^*,u^*,d^*)=f_3(z,u,d)=\omega+d_3.
\end{equation}

For abstraction $S_a^i$ in Section~\ref{sub ar refinement}, the over-approximation is computed with a known and constant control value $u$ over the time period $[0,\tau]$.
The signs of $\frac{\partial f_i}{\partial v}$ for $i\in\{1,2\}$ thus have no influence on this over-approximation since we always have $u=u^*$ in (\ref{eq decomposition}), implying $U_1=U_2=u$ and $\alpha^u_i(u-u^*)=0$.
Since for $i,j\in\{1,2\}$ and $k\in\{1,2,3\}$ the partial derivatives $\frac{\partial f_i}{\partial z_j}$ and $\frac{\partial f_i}{\partial d_k}$ are non-negative, we have $\alpha_{i1}^z=\alpha_{i2}^z=0$, $\alpha_i^d=\begin{pmatrix}0 & 0 & 0\end{pmatrix}$ and $D_i=d$, and we thus obtain for $i\in\{1,2\}$:
\begin{equation}
\label{eq simu decomposition xy}
g_i(z,u,d,z^*,u,d^*)=f_i(Z_i,u,d) + \alpha_{i3}^z(\theta-\theta^*),
\end{equation}
where $Z_1$, $Z_2$, $\alpha_{13}^z$ and $\alpha_{23}^z$ are defined as in Section~\ref{sub ar reachability} from the values of the four bounds $a^z_{13}$, $b^z_{13}$, $a^z_{23}$ and $b^z_{23}$ of the remaining two partial derivatives whose signs are not constant: 
\begin{equation}
\label{eq simu partial bounds}
\frac{\partial f_1}{\partial\theta}=-v\sin(\theta)\in[a^z_{13},b^z_{13}],
\quad\frac{\partial f_2}{\partial\theta}=v\cos(\theta)\in[a^z_{23},b^z_{23}].
\end{equation}

Considering $\theta\in(-\pi,\pi]$ would result in too conservative global bounds $\frac{\partial f_1}{\partial\theta},\frac{\partial f_2}{\partial\theta}\in[-v,v]$.
Instead, we follow Remark~\ref{rmk bounds} to find a subset of possible orientations on each sampling period $[0,\tau]$ and thus obtain tighter local bounds in (\ref{eq simu partial bounds}).
Given an interval of initial orientations $[\underline{\theta_0},\overline{\theta_0}]\subseteq(-\pi,\pi]$ and a known angular velocity $\omega$, (\ref{eq unicycle}) gives $\dot\theta\in[\omega+\underline{d}_3,\omega+\overline{d}_3]$, and thus the orientation $\theta(\tau)$ at time $\tau>0$ is bounded as $\theta(\tau)\in[\underline{\theta_0}+\tau(\omega+\underline{d}_3),\overline{\theta_0}+\tau(\omega+\overline{d}_3)]$.
\ifdouble
  Over the whole sampling period $[0,\tau]$, we obtain the following set $[\underline{\theta},\overline{\theta}]$ of possible orientations: $\theta([0,\tau])\in[\underline{\theta},\overline{\theta}]=[\underline{\theta_0}+\min(0,\tau(\omega+\underline{d}_3)),\overline{\theta_0}+\max(0,\tau(\omega+\overline{d}_3))]$.
\else
  Over the whole sampling period $[0,\tau]$, we obtain the following set $[\underline{\theta},\overline{\theta}]$ of possible orientations:
  \begin{equation}
  \label{eq simu orientation}
  \theta([0,\tau])\in[\underline{\theta},\overline{\theta}]=[\underline{\theta_0}+\min(0,\tau(\omega+\underline{d}_3)),\overline{\theta_0}+\max(0,\tau(\omega+\overline{d}_3))].
  \end{equation}
\fi

When $v\geq0$, the bounds in (\ref{eq simu partial bounds}) can thus be computed by:
\begin{equation}
\label{eq simu bounds}
\begin{split}
a^z_{13}=-v\max_{\theta\in[\underline{\theta},\overline{\theta}]}(\sin(\theta)),\qquad & a^z_{23}=v\min_{\theta\in[\underline{\theta},\overline{\theta}]}(\cos(\theta))\\
b^z_{13}=-v\min_{\theta\in[\underline{\theta},\overline{\theta}]}(\sin(\theta)),\qquad & b^z_{23}=v\max_{\theta\in[\underline{\theta},\overline{\theta}]}(\cos(\theta))
\end{split}
\end{equation}
with swapped $\min$ and $\max$ operators when $v<0$.
Since $[\underline{\theta},\overline{\theta}]\cap(-\pi,\pi]\neq\emptyset$, the extrema of the $\cos$ and $\sin$ are:
\begin{equation}
\min_{\theta\in[\underline{\theta},\overline{\theta}]}(\cos(\theta))=
\begin{cases}
-1 \hfill\text{if }\{-\pi,\pi\}\cap[\underline{\theta},\overline{\theta}]\neq\emptyset\\
\min(\cos(\underline\theta),\cos(\overline\theta)) \hfill\quad\text{otherwise}
\end{cases}\label{eq max sin}
\end{equation}
with similar equations replacing $\{-\pi,\pi\}$ by $\{-2\pi,0,2\pi\}$, $\{\frac{-5\pi}{2},\frac{-\pi}{2},\frac{3\pi}{2}\}$ and $\{\frac{-3\pi}{2},\frac{\pi}{2},\frac{5\pi}{2}\}$ for $\max(\cos(\theta))=1$, $\min(\sin(\theta))=-1$ and $\max(\sin(\theta))=1$, respectively.

The bounds $a^z_{13}$, $b^z_{13}$, $a^z_{23}$ and $b^z_{23}$ computed from (\ref{eq simu bounds})-(\ref{eq max sin}) for each sampling period lead to function $g$ defined in (\ref{eq simu decomposition theta})-(\ref{eq simu decomposition xy}) followed by the duplicated dynamical system (\ref{eq system duplicated}) with vector field $h$ and trajectories $\Phi_h$ as in Section~\ref{sub ar reachability}.
Theorem~\ref{th over approximation} with state interval $[\underline{s},\overline{s}]\subseteq\mathcal{Z}$ and constant control $u\in\mathcal{U}$ gives the over-approximation $[\Phi_h^1(\tau,\underline{z},u,\underline{d},\overline{z},u,\overline{d}),\Phi_h^2(\tau,\underline{z},u,\underline{d},\overline{z},u,\overline{d})]$ as used for the abstraction refinement in Section~\ref{sub ar refinement}.

\subsection{Problem description and simulation results}
\label{sub simu problem}
We consider a high-level motion planning problem for a mobile robot evolving in a $33\times 20$ square meters office environment.
This 2D workspace is formed by four rooms and a central hallway, as sketched in Figure~\ref{fig plan4} uniformly partitioned into $20\times 12$ cells and where the black cells represent static obstacles (walls).
The four regions of interest (in blue) denoted as $\pi_1$ to $\pi_4$ correspond to the cells in which the observation tasks of each room are to be carried out.

The robot is modeled as a unicycle (\ref{eq unicycle}) where the state $z=(x,y,\theta)$ evolves in $\mathcal{Z}=[0,33]\times[0,20]\times(-\pi,\pi]$, the control inputs $u=(v,\omega)$ are picked in the discrete set $U_a=\{-0.5,-0.25,0,0.25,0.5\}\times\{-0.3,-0.15,0,0.15,0.3\}$ and the disturbances lie in $\mathcal{D}=[-0.05,0.05]\times[-0.05,0.05]\times[-0.03,0.03]$.
The initial state of (\ref{eq unicycle}) is taken in the cell $\pi_1$.

The control objective is expressed by the LTL formula $\varphi=\square\lozenge \pi_2 \wedge \square\lozenge \pi_4 \wedge \lozenge \pi_3 \wedge \neg \pi_3\mathcal{U} \pi_4$, whose first two elements are a surveillance task in $\pi_2$ and $\pi_4$ (visit each infinitely often) and the last two elements mean that we also want to eventually visit $\pi_3$ but not before $\pi_4$ has been visited at least once.
The obstacle avoidance is not included in $\varphi$ since it is automatically handled in the second layer of the hierarchical decomposition.

The simulation results are obtained on a laptop with a $1.7$ GHz CPU and $4$ GB of RAM (on Matlab for steps $2$ and $3$).

\subsubsection{First layer - LTL planning}
\label{subsub simu layer1}
As in Section~\ref{subsub prelim LTL}, we first define the finite transition system $S_{\Pi}=(\Pi,\delta_{\Pi})$, where $\Pi=\{\pi_1,\pi_2,\pi_3,\pi_4\}$ is the set of regions of interest and $\delta_{\Pi}\subseteq\Pi^2$ represents the office structure in Figure~\ref{fig plan4} such that room $1$ can only be reached through room $2$: $\delta_{\Pi}=\Pi^2\backslash\{(\pi_1,\pi_3),(\pi_3,\pi_1),(\pi_1,\pi_4),(\pi_4,\pi_1)\}$.
The LTL planning is solved in $2$ milliseconds (due to the small size of $S_\Pi$) by the model checker P-MAS-TG~\cite{guo2015multi} resulting in an infinite accepting path $\bar\pi=\pi_1\pi_2\pi_4\pi_3(\pi_2\pi_4)^\omega$ where the prefix $\pi_1\pi_2\pi_4\pi_3$ is followed once and the suffix $\pi_2\pi_4$ is repeated infinitely often.
This infinite path can then be handled by applying the next two layers to only five pairs of regions of interest: $\pi_1-\pi_2$, $\pi_2-\pi_4$, $\pi_4-\pi_3$, $\pi_3-\pi_2$ and $\pi_4-\pi_2$.

\subsubsection{Second layer - Physical environment}
\label{subsub simu layer2}
The transition system $S_{\mathcal{P}}=(\mathcal{P},\delta_{\mathcal{P}})$ is then defined as in Section~\ref{subsub prelim plan} for the planning and obstacle avoidance in the 2D workspace (partitioned in $20\times 12$ \emph{cells}) while disregarding the system dynamics (\ref{eq unicycle}).
For each of the above $5$ pairs $\pi_i-\pi_j$, one of the shortest discrete plan $\Lambda^{ij}=\sigma^{ij}_0\sigma^{ij}_1\dots\sigma^{ij}_{r_{ij}}$ in $\mathcal{P}$ connecting $\sigma^{ij}_0=\pi_i$ and $\sigma^{ij}_{r_{ij}}=\pi_j$ is obtained by applying a Breadth-First Search algorithm~\cite{cormen2009introduction} on $S_{\mathcal{P}}$, with an average computation time of $46$ milliseconds per pair $\pi_i-\pi_j$.

\subsubsection{Third layer - Control synthesis}
\label{subsub simu layer3}
Taking inspiration from the guidelines in~\cite{meyer2017ifac}, we pick the sampling period $\tau = 1.2*\max(33/20,20/12)/0.5=4$ seconds approximating the minimal time to translate vertically or horizontally a cell in $\mathcal{P}$ to one of its neighbors using maximal linear velocity $u=(0.5,0)$.
The factor $1.2$ multiplies this value to account for (\ref{eq unicycle}) not always keeping axis-aligned orientations $\theta\in\{-\pi/2,0,\pi/2,\pi\}$ or using the maximal linear velocity.

Since the control objective $\Lambda^{ij}$ for this third layer is in the 2D workspace while system (\ref{eq unicycle}) has a 3D state, we consider a modified definition for the valid set of a 2D cell $\sigma\in\mathcal{P}$:
\ifdouble
  $V_{2D}^i(\sigma)=\{s\subseteq\sigma~|~\exists s_\theta\subseteq(-\pi,\pi], s\times s_\theta\in X_a^i\cap V^i(\sigma\times(-\pi,\pi])\}$,
\else
  \begin{equation}
  \label{eq valid unicycle}
  V_{2D}^i(\sigma)=\{s\subseteq\sigma~|~\exists s_\theta\subseteq(-\pi,\pi], s\times s_\theta\in X_a^i\cap V^i(\sigma\times(-\pi,\pi])\},
  \end{equation}
\fi
i.e.\ $V_{2D}^i(\sigma)$ contains any 2D projection $s$ of a 3D symbol $s\times s_\theta$ belonging to both the refined partition $X_a^i$ and the valid set of the 3D cell $\sigma\times(-\pi,\pi]$ as in Section~\ref{sub ar refinement}.
As a result, the abstraction refinement still works on the partition $X_a^i$ of the 3D state space but becomes more permissive since the 2D control synthesis (using $V_{2D}^i$) disregards the validity of the orientation.
The drawback is that after each application of the controller, we may need to rotate the robot to reach a valid orientation (which always exists by definition of $V_{2D}^i$).

Function \texttt{Pick} in Algorithm~\ref{algo global} is chosen similarly to~\cite{meyer2017acc} as a queue picking the oldest cell of $\Lambda^{ij}$ added to the queue among the least refined ones.
The orientation interval $(-\pi,\pi]$ is initially partitioned into $4$ identical intervals.
Function \texttt{Split}$(s)$ in Algorithm~\ref{algo global} then takes a uniform partition of the 3D symbol $s$ into $8$ subsymbols ($2$ per dimension).

For the $5$ plans $\Lambda^{12}$, $\Lambda^{24}$, $\Lambda^{43}$, $\Lambda^{32}$ and $\Lambda^{42}$ from layer $2$, the control synthesis takes between $18$ and $58$ minutes ($39$ minutes per plan on average) with a number of refinement iterations ranging from $40$ to $76$ ($54$ per plan on average).

\subsubsection{Simulation results}
\label{subsub simu results}
Figure~\ref{fig plan4} provides a visualization of the abstraction refinement results for plan $\Lambda^{32}$, where the finer black grid is the 2D projection of the refined partition $X_a^i$ and the red area is the 2D valid sets $V_{2D}^i(\sigma)$ (not represented on cells $\pi_2$ and $\pi_3$).
In this particular case, the valid sets happen to cover the whole cells: $V^i(\sigma^i_k)=P_a^i(\sigma^i_k)$ for all $\sigma^i_k$.

The disturbed unicycle (\ref{eq unicycle}) in closed-loop with the global controller $C$ from Corollary~\ref{coro controller} is then simulated from an initial state $z_0$ randomly picked in the 3D cell $\pi_1\times(-\pi,\pi]$.
At each time step, the robot measures its position $z=(x,y,\theta)$ and finds the corresponding 3D symbol $s_{3D}=s_{2D}\times s_\theta\in X_a^i$ such that $z\in s_{3D}$.
Since the control synthesis was successful for the current 2D cell $\sigma_{2D}\in\mathcal{P}$ with $s_{2D}\subseteq\sigma_{2D}$, then by definition of $s_{2D}\in V_{2D}^i(\sigma_{2D})$ there exists a valid 3D symbol $s_{3D}'=s_{2D}\times s_\theta'\in X_a^i\cap V^i(\sigma_{2D}\times(-\pi,\pi])$.
If $s_{3D}\neq s_{3D}'$, we apply a constant rotation $u=(0,\omega)$ until the system reaches a new state $z'=(x,y,\theta')\in s_{3D}'$ (assuming that we have $d_1=d_2=0$ during such rotations only).
Whether a rotation was done or not, we finally apply the constant control value $C_a^i(s_{2D})$ for $\tau=4$ seconds to go to the next cell of the current plan and repeat this procedure with a new measurement of the state.
The closed-loop trajectory for plan $\Lambda^{32}$ is displayed in green in Figure~\ref{fig plan4}.
The snaps in this trajectory correspond to rotations before applying the next control, while smoother sections over several cells mean that no rotation was needed.

\begin{figure}[htb]
    \centering
    \includegraphics[width=\columnwidth]{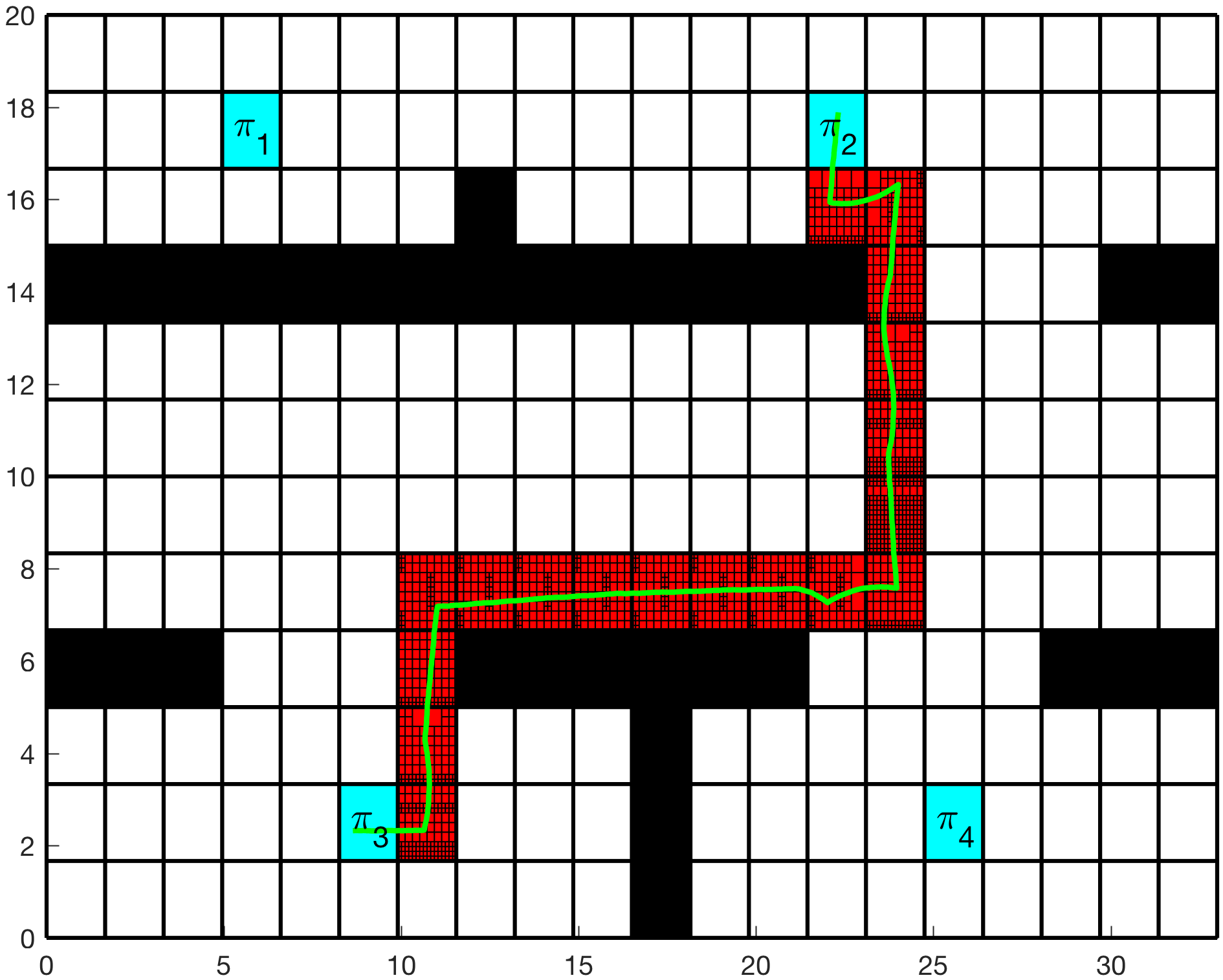}
    \caption{Partitioned office environment with obstacles (black) and four regions of interest (blue) as the center of each room. For plan $\Lambda^{32}$ from $\pi_3$ to $\pi_2$, we also display the refined partition (finer black grid), the valid symbols (red) and the closed-loop trajectory (green).}
    \label{fig plan4}
\end{figure}

\subsubsection{Final comments}
\label{subsub simu comments}
As mentioned in Section~\ref{sub prelim hierarchical}, the solver of each layer can be substituted by any existing tool designed to tackle the same subproblem.
For the example of this section, we could also consider replacing layers $1$ and $2$ by conPAS2~\cite{conpas2} for a fully actuated system $\dot z=u$, or layers $2$ and $3$ by the unicycle example provided in PESSOA~\cite{pessoa}.
On the other hand, none of the existing tools mentioned in Section~\ref{sec intro} are capable of solving the whole control problem tackled in this section, because they do not handle either nonlinear systems~\cite{tulip,conpas2,ltlmop} or general LTL specifications~\cite{pessoa,cosyma,scots}.
As a consequence, the $3$-layer hierarchical framework introduced in Section~\ref{sub prelim hierarchical} is strictly more general than these tools and thus cannot be compared with them on the case study of this section for the disturbed unicycle (\ref{eq unicycle}) under the LTL specification $\varphi$.

To highlight the novelty of the proposed approach, we can however discuss the complexity for approaching the same problem with abstraction-based methods without relying on the hierarchical decomposition or abstraction refinement.
With no synthesis of an accepting path $\bar\pi$ for the LTL formula and of the discrete plans $\Lambda^{ij}$ realizing this path (as done in layers $1$ and $2$), a symbolic abstraction needs to be created for the whole state space.
The creation of such abstraction takes over $43$ hours (compared to $3$ hours in our framework) when using the finest partition granularity reached in the abstraction refinement layer to ensure that the controller obtained above could be reproduced in this setting.
Further intensive computation would also result from attempting a controller synthesis with respect to the LTL specification on this abstraction containing over $98$ millions state-input pairs.
This last step could not even be attempted as the abstraction variable weighted over 4GB which could not be stored in Matlab.

\section{Conclusion}
\label{sec conclu}
The first contribution of this paper is a three-layer hierarchical decomposition of a high-level control problem under a Linear Temporal Logic formula by iteratively solving finer versions of the problem: first solve the problem only on the regions of interest involved in the LTL formula, then realize the obtained sequence of regions by finding discrete plans in the partitioned workspace, finally synthesize a controller for the dynamical system to follow these plans.
This framework enables the consideration of general LTL control problems for nonlinear systems and the subproblems defined at each layer can be solved through various existing tools.
The second contribution is a new method to over-approximate the finite-time reachable set of any continuously differentiable system, relying on an  auxiliary monotone system obtained by using Jacobian bounds to compensate the non-monotone components of the initial system.
For generality of the hierarchical framework to continuously differentiable systems, an implementation for layer 3 is proposed using this new reachability analysis result within an abstraction refinement algorithm.


The main goal for future work is to strengthen the cohesion in the implementation of the three control layers to distribute this framework as a coherent and publicly available tool.

\bibliographystyle{abbrv}
\bibliography{Literature} 

\end{document}